\documentclass[conference,a4paper]{IEEEtran}
\IEEEoverridecommandlockouts

\addtolength{\topmargin}{9mm}

\usepackage[utf8]{inputenc} 
\usepackage[T1]{fontenc}
\usepackage{url}              % provides \url{...}
\usepackage{cite,setspace}             % improves presentation of citations

\usepackage[cmex10]{amsmath}  % Use the [cmex10] option to ensure complicance
                              % with IEEEXplore (see bare_conf.tex)
\interdisplaylinepenalty=1000 % As explained in bare_conf.tex
\usepackage{mleftright}       % fix to wrong spacing of \left-,
\mleftright                   % \middle- \right-commands 

\usepackage{graphicx}         % provides \includegraphics{...} to
                              % include graphics (pdf format)
\usepackage{booktabs}         % fixes poor spacing in tables and
                              % provides \toprule, \midrule, \bottomrule

\usepackage{verbatim}
\usepackage{enumerate}
\usepackage[mathcal]{euscript}
\usepackage{amssymb,amsfonts,amsthm}
\usepackage{marginnote} %for math mode margin notes.

\usepackage{microtype}

\usepackage{enumitem}

\bibliographystyle{IEEEtran}

\usepackage[usenames]{color}
\definecolor{plum}  {rgb}{.4,0,.4}
\definecolor{BrickRed} {rgb}{0.6,0,0}

 %rather than \textcolor, which breaks on certain inputs

\usepackage{commath}
\usepackage[normalem]{ulem}

\newcommand\restr[2]{{% we make the whole thing an ordinary symbol
  \left.\kern-\nulldelimiterspace % automatically resize the bar with \right
  #1 % the function
  \littletaller % pretend it's a little taller at normal size
  \right|_{#2} % this is the delimiter
  }}

\newcommand{\littletaller}{\mathchoice{\vphantom{\big|}}{}{}{}}

%from djhsu --->

\def\ddefloop#1{\ifx\ddefloop#1\else\ddef{#1}\expandafter\ddefloop\fi}

% \cA, \cB, ...
\def\ddef#1{\expandafter\def\csname c#1\endcsname{\ensuremath{\mathcal{#1}}}}
\ddefloop ABCDEFGHIJKLMNOPQRSTUVWXYZ\ddefloop

% \sA, \sB, ...
\def\ddef#1{\expandafter\def\csname s#1\endcsname{\ensuremath{\mathsf{#1}}}}
\ddefloop ABCDEFGHIJKLMNOPQRSTUVWXYZ\ddefloop

%<--- from djhsu

\def\E{\mathbf{E}}
\def\Pr{\mathbf{P}}

\def\Reals{\mathbb{R}}
\def\Naturals{\mathbb{N}}

\def\deq{:=}

\def\bd#1{\boldsymbol{#1}}

\def\1{{\mathbf 1}}

\def\eps{\varepsilon}

\newtheorem{theorem}{Theorem}

\newtheorem{proposition}{Proposition}
\newtheorem{corollary}{Corollary}

%%%%%
%% correct bad hyphenation here
\hyphenation{op-tical net-works semi-conduc-tor}

% -------------------------------------------------------------------------
\begin{document}

\title{Majorizing Measures, Codes, and Information} 

\author{Yifeng Chu and Maxim Raginsky
\thanks{The authors are with the Department of Electrical and Computer Engineering and the Coordinated Science Laboratory, University of Illinois, Urbana, IL 61801, USA. Emails: ychu26@illinois.edu, maxim@illinois.edu.}
\thanks{This work was supported by
the Illinois Institute for Data Science and Dynamical Systems (iDS${}^2$), an
NSF HDR TRIPODS institute, under award CCF-1934986.}}

\maketitle

\begin{abstract}
The majorizing measure theorem of Fernique and Talagrand is a fundamental result in the theory of random processes. It relates the boundedness of random processes indexed by elements of a metric space to complexity measures arising from certain multiscale combinatorial structures, such as packing and covering trees. This paper builds on the ideas first outlined in a little-noticed preprint of Andreas Maurer to present an information-theoretic perspective on the majorizing measure theorem, according to which the boundedness of random processes is phrased in terms of the existence of efficient variable-length codes for the elements of the indexing metric space. 
\end{abstract}

% !TEX root = majorizing_codes_ISIT_final.tex

	\section{Introduction}
	
	Let $(T,d)$ be a compact metric space, and let $(X_t)_{t \in T}$ be a zero-mean random process indexed by the elements of $T$ and satisfying the increment condition
	\begin{align}\label{eq:increment}
		\Pr \Big[ |X_s - X_t| \ge u d(s,t) \Big] \le 2e^{-u^2}, \quad s,t \in T,\, u > 0.
	\end{align}
The question of boundedness of the expected supremum $\E[\sup_{t \in T}X_t]$ arises in a variety of applications, such as high-dimensional probability, convex geometry, functional analysis, theoretical computer science, mathematical statistics, and learning theory \cite{Talagrand_2014}. A fundamental result of Fernique \cite{Fernique_1975} and Talagrand \cite{Talagrand_1987} states that, for any fixed but arbitrary $t_0 \in T$,
\begin{align}\label{eq:mm_upper_bound}
	\E\Big[\sup_{t \in T}|X_t - X_{t_0}|\Big] \lesssim \inf_{\mu \in \cP(T)}\sup_{t \in T} I_\mu(t),
\end{align}
where $\cP(T)$ is the set of all Borel probability measures on $T$, 
\begin{align}\label{eq:FT}
	I_\mu(t) \deq \int^{{\rm diam}(T)}_0 \sqrt{\log \frac{1}{\mu(B(t,\eps))}}\dif \eps
\end{align}
is the \textit{Fernique--Talagrand functional}, and $a \lesssim b$ indicates that $a \le Cb$ for some absolute constant $C > 0$. Here, ${\rm diam}(T) \deq \sup \{d(s,t) : s,t \in T\}$ is the (finite) diameter of $T$, $B(t,\eps)$ denotes the closed ball of radius $\eps$ centered at $t$, and the logarithm is taken to base $2$. Moreover, if $(X_t)_{t \in T}$ is a Gaussian process and $T$ is endowed with the canonical metric
\begin{align}\label{eq:can_metric}
	d(s,t) \deq \sqrt{\E[(X_s-X_t)^2]},
\end{align}
then the bound \eqref{eq:mm_upper_bound} can be reversed \cite{Talagrand_1987}:
\begin{align}\label{eq:mm_lower_bound}
	\E\Big[\sup_{t \in T}|X_t - X_{t_0}|\Big] \gtrsim \inf_{\mu \in \cP(T)}\sup_{t \in T} I_\mu(t).
\end{align}
The inequality \eqref{eq:mm_upper_bound} says that the expected supremum of $(X_t)$ is bounded provided there exists at least one $\mu \in \cP(T)$, such that $\sup_{t \in T}I_\mu(t) < \infty$ (such probability measures are called \textit{majorizing measures}); moreover, for Gaussian processes, the existence of a majorizing measure is equivalent to boundedness of expected suprema. 

While the Fernique--Talagrand functional provides sharp bounds on the expected supremum of $(X_t)$, the machinery of majorizing measures was eventually replaced by an equivalent combinatorial framework involving various multiscale discrete structures, such as covering and packing trees (see, e.g., \cite{Guedon_2003} or \cite[Ch.~6]{Talagrand_2014}). In a little-noticed preprint \cite{Maurer_majorizing}, A.~Maurer has provided an alternative information-theoretic perspective on majorizing measures starting from the observation that the integrand in  \eqref{eq:FT} has an information-theoretic flavor--indeed, $\log \frac{1}{\mu(B(t,\eps))}$ is (roughly) proportional to the number of bits one would use to localize a point $t \in T$ up to a ball of radius $\eps$ when given a `prior' probability measure $\mu \in \cP(T)$. Building on this intuition, Maurer used the well-known correspondence between probability measures and uniquely decodable codes on a discrete (finite or countably infinite) alphabet \cite[Sec.~5.5]{Cover_Thomas} to show that the Fernique--Talagrand functional can be bounded both from above and from below in terms of information-theoretic quantities pertaining to sequences of variable-length lossy codes on $T$ that allow for arbitrarily accurate reconstruction of the points $t \in T$ using sufficiently long codewords, with the metric $d$ playing the role of the fidelity criterion \cite{Kontoyiannis_2002}.

In this paper, we develop this perspective further by giving a more direct construction of multiscale variable-length codes that allows us to seamlessly recover a number of existing bounds on expected suprema. Moreover, we derive an information-theoretic version of the lower bound \eqref{eq:mm_lower_bound} for Gaussian processses that matches our upper bounds up to a multiplicative constant. In terms of applications, we expect these results to be of use in analyzing the statistical performance of machine learning algorithms through the lens of data compression and the Minimum Description Length principle \cite{Massart_concentration,Audibert_Bousquet}.

	\section{Preliminaries}
	
	\subsection{Variable-length codes on metric spaces}
	\label{sec:VLC}
	
Let a compact metric space $(T,d)$ be given, with ${\rm diam}(T) = 1$. A \textit{variable-length code} (VLC) on $T$ is a pair $(\pi,f)$, where $\pi : T \to T$ is a mapping whose image $\pi(T)$ is a discrete  subset of $T$ and $f : \pi(T) \to \{0,1\}^*$ is a uniquely decodable representation of the elements of $\pi(T)$ by finite-length binary strings. In particular, it satifsies the Kraft--McMillan inequality
\begin{align}\label{eq:Kraft}
	\sum_{s \in \pi(T)} 2^{-\ell (f(s))} \le 1,
\end{align}
where $\ell(\cdot)$ denotes the length of a binary string. We say that a VLC $(\pi,f)$ \textit{operates at resolution $\rho \ge 0$} if $d(t,\pi(t)) \le \rho$ for all $t \in T$ (if $\rho = 0$, then $T$ is necessarily discrete, and $(\pi,f)$ is lossless). A natural procedure for constructing such VLC's is to fix a $\rho$-partition $\cA$ of $T$, i.e., a family of disjoint subsets $A \subseteq T$ of diameter at most $\rho$ whose union is all of $T$; if $\rho > 0$, then it is always possible to find a finite partition by compactness of $T$. Then, picking a point $s_A$ in each $A \in \cA$, we take $\pi(t) \deq s_{A(t)}$, where $A(t)$ is the unique element of $\cA$ that contains $t$. The mapping $\pi$ constructed in this way is evidently idempotent, i.e., $\pi \circ \pi = \pi$. Finally, with $\pi(T) = \{ s_A : A \in \cA\}$, we can choose any mapping $f : \pi(T) \to \{0,1\}^*$ that satisfies \eqref{eq:Kraft}. For instance, we can pick any $\mu \in \cP(T)$ such that $\mu(A) > 0$ for all $A \in \cA$ and let $f$ be the Shannon code with codelengths $\ell(f(s_A)) = \lceil \log \frac{1}{\mu(A)}\rceil, A \in \cA$. 

We will be working with \textit{sequences} of VLC's that operate at multiple scales. We will say that $\cC = \{(\pi_k,f_k)\}_{k \in \Naturals}$ is \textit{admissible} if the following conditions are satisfied:
\begin{itemize}
	\item[(i)] for each $k \in \Naturals$, $\pi_k$ is idempotent and there exists a mapping $\varphi_{k|k+1} : T \to T$, such that $\pi_k = \varphi_{k|k+1} \circ \pi_{k+1}$;
%	\item[(ii)] for each $t \in T$ there exists some $k(t) \in \Naturals$, such that $t = \pi_{k(t)}(t) = \pi_{k(t)+1}(t) = \dots$;
	\item[(ii)] $(\pi_k,f_k)$ operates at resolution $\rho_k$, and $\rho_k \to 0$ as $k \to \infty$;
	\item[(iii)] for each $t$ and each $k$, $\ell(f_k \circ \pi_k(t)) \le \ell(f_{k+1} \circ \pi_{k+1}(t))$.
\end{itemize}
We also define $(\pi_0,f_0)$ with $\pi_0(T) = \{t_0\}$ and $\ell(f_0(t_0)) = 0$ for a fixed but arbitrary point $t_0 \in T$. Condition (i) says that, for each $t$, the encoding $\pi_k(t)$ can be determined from any $\pi_m(t)$, $m > k$, without knowledge of $t$. We can take $\varphi_{k|k+1} = \pi_k$ without loss of generality since $\pi_k = \varphi_{k|k+1} \circ \pi_{k+1} = \varphi_{k|k+1} \circ \pi_{k+1} \circ \pi_{k+1} = \pi_k \circ \pi_{k+1}$. This implies that the partition $\cA_{k+1}$ induced by $\pi_{k+1}$ is a refinement of $\cA_k$, and for each $t \in T$ the sets $A_k(t)$ (of diameter $\rho_k$) are nested, i.e., $T \equiv A_0(t) \supseteq A_1(t) \supseteq A_2(t) \supseteq \dots$. Conditions (i) and (ii) imply that $\rho_k \searrow 0$  as $k \to \infty$. Finally, condition (iii) constrains the binary descriptions of each $t$ to become more `informative' as $k$ increases.

A natural procedure for constructing an admissible sequence of VLC's is to start with an increasing sequence $(\cA_k)_{k \ge 1}$ of partitions of $T$ such that each $A \in \cA_k$ has diameter at most $\rho_k$, with $\rho_k \searrow 0$ as $k \to \infty$. The mappings $\pi_k$ are constructed exactly in the same manner as for a single partition, so that conditions (i)--(ii) will be satisfied automatically. To construct the binary encodings $f_k$ satisfying condition (iii), choose some $\mu_1 \in \cP(T)$ such that $\mu_1(A) > 0$ for all $A \in \cA_1$; then, for each $k \ge 1$ and each $B \in \cA_k$, choose a probability measure $\nu_{k+1}(\cdot|B) \in \cP(T)$ such that $\nu_{k+1}(A|B) > 0$ for each $A \in \cA_{k+1}(B) \deq \{ A \in \cA_{k+1} : A \subseteq B \}$ and $0$ otherwise. This defines a sequence of probability measures $(\mu_k)_{k \ge 1}$ by
\begin{align}\label{eq:mu_k}
	\mu_{k+1}(\cdot) &= \sum_{B \in \cA_k}\nu_{k+1}(\cdot|B)\mu_k(B),
\end{align}
and we can choose each $f_k$ so that (ignoring the rounding issues) for each $t \in T$ and each $k$ we have
\begin{align*}
	&\ell(f_{k+1} \circ \pi_{k+1}(t)) \approx \log \frac{1}{\mu_{k+1}(A_{k+1}(t))} \\
	&\quad = \log \frac{1}{\mu_k(A_k(t))} + \log \frac{1}{\nu_{k+1}(A_{k+1}(t)|A_k(t))} \\
	&\quad \ge \log \frac{1}{\mu_k(A_k(t))} \\
	&\quad \approx \ell(f_k \circ \pi_k(t)),
\end{align*}
where in the second line we have used the fact that $A_{k+1}(t) \in \cA_{k+1}(A_k(t))$, and therefore $\mu_{k+1}(A_{k+1}(t)) = \nu_{k+1}(A_{k+1}(t)|A_k(t))\mu_k(A_k(t))$.

\subsection{Some facts about the Fernique--Talagrand functional}

The Fernique--Talagrand functional \eqref{eq:FT} can be expressed in terms of increasing sequences of partitions. To that end, following Bednorz \cite{Bednorz2015}, we define the following functional on pairs of probability measures $\mu,\nu \in \cP(T)$:
\begin{align*}
	M(\mu,\nu) \deq \int_T \int^1_0 \sqrt{\log\frac{1}{\mu(B(t,\eps))}}\dif \eps\, \nu(\dif t).
\end{align*}
Then $I_\mu(t) = M(\mu,\delta_t)$, where $\delta_t$ is the Dirac measure centered at $t$, and the bounds \eqref{eq:mm_upper_bound} and \eqref{eq:mm_lower_bound} can be expressed as
\begin{align*}
	\E\Big[\sup_{t \in T}X_t\Big] \lesssim \inf_\mu \sup_t M(\mu,\delta_t) =  \inf_\mu \sup_\nu M(\mu,\nu)
\end{align*}
and (for Gaussian processes on $T$ with the induced metric)
\begin{align*}
	\E\Big[\sup_{t \in T}X_t\Big] \gtrsim \inf_\mu \sup_t M(\mu,\delta_t) =  \inf_\mu \sup_\nu M(\mu,\nu).
\end{align*}
We then have the following upper bounds on $I_\mu(t)$ and $M(\mu,\nu)$ \cite{Bednorz2015}: Let $\cA = (\cA_k)_{k \ge 1}$ be an increasing sequence of partitions of $T$, such that ${\rm diam}(A) \le r^{-k}$ for all $A \in \cA_k$ and all $k$ for some $r > 1$. Then, setting $\cA_0 = \{T\}$, we have
\begin{align*}
	I_\mu(t) \le \sum_{k > 0}  r^{-k+1}\sqrt{\log \frac{\mu(A_{k-1}(t))}{\mu(A_k(t))}}
\end{align*}
and
\begin{align*}
	M(\mu,\nu) \le \sum_{k > 0} r^{-k+1}\sum_{B \in \cA_{k-1}} \sum_{A \in \cA_{k}(B)} \nu(A)\sqrt{\log \frac{\mu(B)}{\mu(A)}}.
\end{align*}
We can express these bounds in information-theoretic terms. Let us define, for any $k \ge 1$ and any $B \in \cA_j$ with $j < k$, the entropy and the cross-entropy
\begin{align*}
	H_{\cA_k|B}(\mu) &\deq -\sum_{A \in \cA_k}\mu(A|B)\log \mu(A|B), \\
	H_{\cA_k|B}(\mu,\nu) &\deq -\sum_{A \in \cA_k}\mu(A|B)\log \nu(A|B).
\end{align*}
Using Jensen's inequality and the bound $\inf_\mu \sup_t M(\mu,\delta_t) \le \sup_\mu M(\mu,\mu)$ \cite{Fernique_1978,Bednorz2015}, we can obtain the following:
\begin{proposition}\label{prop:it_mm_upper_bound} For any $r > 1$,
	\begin{align*}
	&	\E\Big[\sup_{t \in T}X_t\Big]\nonumber\\
	&\qquad  \lesssim \inf_{\cA}\sup_{\mu}  \sum_{k > 0} r^{-k+1}\sum_{B \in \cA_{k-1}} \mu(B) \sqrt{H_{\cA_k|B}(\mu)},
	\end{align*}
	where the infimum is over all increasing sequences $\cA = (\cA_k)_{k \ge 1}$ of partitions of $T$ with ${\rm diam}(A) \le r^{-k}$ for all $A \in \cA_k, k \ge 1$.
\end{proposition}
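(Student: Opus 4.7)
The plan is to chain together three ingredients that the excerpt has already placed on the table: (a) the Fernique--Talagrand upper bound $\E[\sup_t X_t] \lesssim \inf_\mu \sup_t M(\mu,\delta_t)$, (b) the Fernique--Bednorz self-dual reduction $\inf_\mu \sup_t M(\mu,\delta_t) \le \sup_\mu M(\mu,\mu)$, and (c) the partition bound on $M(\mu,\nu)$ applied at $\nu=\mu$. The only real work is a one-line application of Jensen's inequality to convert the weighted sum of square roots $\sqrt{\log(\mu(B)/\mu(A))}$ into a square root of a conditional entropy.

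First I would fix an arbitrary increasing partition sequence $\cA=(\cA_k)_{k \ge 1}$ with $\mathrm{diam}(A) \le r^{-k}$ for $A \in \cA_k$ and an arbitrary $\mu \in \cP(T)$. Combining (a) and (b) gives
\begin{align*}
\E\Big[\sup_{t\in T} X_t\Big] \lesssim \sup_\mu M(\mu,\mu),
\end{align*}
and the partition bound quoted from \cite{Bednorz2015}, specialized to $\nu=\mu$, yields
\begin{align*}
M(\mu,\mu) \le \sum_{k > 0} r^{-k+1} \sum_{B \in \cA_{k-1}} \sum_{A \in \cA_k(B)} \mu(A)\sqrt{\log \frac{\mu(B)}{\mu(A)}}.
\end{align*}

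Next I would process the inner sum for each fixed $B \in \cA_{k-1}$. Using $\mu(A) = \mu(A|B)\mu(B)$ and $\log(\mu(B)/\mu(A)) = \log(1/\mu(A|B))$ for $A \in \cA_k(B)$, and then applying Jensen's inequality to the concave function $\sqrt{\cdot}$ with respect to the probability measure $\mu(\cdot|B)$ on $\cA_k(B)$:
\begin{align*}
\sum_{A \in \cA_k(B)} \mu(A) \sqrt{\log \tfrac{\mu(B)}{\mu(A)}}
&= \mu(B) \sum_{A \in \cA_k(B)} \mu(A|B)\sqrt{\log \tfrac{1}{\mu(A|B)}} \\
&\le \mu(B) \sqrt{H_{\cA_k|B}(\mu)}.
\end{align*}
Substituting this back gives $M(\mu,\mu) \le \sum_{k>0} r^{-k+1} \sum_{B \in \cA_{k-1}} \mu(B) \sqrt{H_{\cA_k|B}(\mu)}$ for every $\mu$ and every admissible $\cA$. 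Taking $\sup_\mu$ on both sides, and then $\inf_\cA$ (which is legitimate because the left-hand side does not depend on $\cA$), delivers the claimed bound.

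There is no real obstacle here, since the two key lemmas are invoked as black boxes; the only subtlety worth flagging is the direction of the Jensen step, which goes through precisely because $\sqrt{\cdot}$ is concave and because the weights $\mu(A|B)$ sum to one over $A \in \cA_k(B)$, so that the bound is tightest when $\mu$ is evenly spread across refinements. The assumption $\mathrm{diam}(A)\le r^{-k}$ enters only through the partition bound on $M(\mu,\mu)$ and is transferred verbatim into the statement of the proposition.
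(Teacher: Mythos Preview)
Your proposal is correct and follows exactly the route the paper indicates: combine the Fernique--Talagrand majorizing-measure bound with the Fernique--Bednorz inequality $\inf_\mu \sup_t M(\mu,\delta_t)\le \sup_\mu M(\mu,\mu)$, apply the Bednorz partition bound on $M(\mu,\mu)$, and then use Jensen on the concave square root to collapse the inner sum into $\mu(B)\sqrt{H_{\cA_k|B}(\mu)}$. There is nothing to add; this is precisely the argument the paper has in mind.
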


	\section{The upper bound and some consequences}
	
To keep things simple, we will assume for the remainder of the paper that $T$ is finite (the case of infinite $T$ is not much more difficult under the mild assumption of separability of all relevant random processes). Let $\cC = \{(\pi_k,f_k)\}_{k \in \Naturals}$ be an admissible sequence of VLC's on $T$, where each $(\pi_k,f_k)$ operates at resolution $\rho_k$. Let also $\bd{p} = (p_k)_{k \ge 1}$ be a positive probability mass function on $\Naturals$ and define the functional
\begin{align}\label{eq:barsigma}
	\bar{\sigma}_{(\cC,\bd{p})}(t) &\deq \sum_{k > 0} \rho_{k-1}\sqrt{\ln \frac{2^{\ell(f_k \circ \pi_k(t))+1}}{p_k}},
\end{align}
where we have set $\rho_0 = {\rm diam}(T) \equiv 1$. The following is a generalization of \cite[Thm.~6]{Maurer_majorizing}:

\begin{theorem}\label{thm:basic_estimate} Let $(X_t)_{t \in T}$ be a centered random process satisfying the increment condition \eqref{eq:increment}. Let $(\cC,\bd{p})$ be such that $\bar{\sigma}_{(\cC,\bd{p})}(t)$ is finite for all $t \in T$. Fix an arbitrary point $t_0 \in T$. Then, for any $u > 0$,
	\begin{align*}
		|X_t - X_{t_0}| \le \bar{\sigma}_{(\cC,\bd{p})}(t) (u+1), \qquad \forall t \in T
	\end{align*}
	with probability at least $1 - e^{-u^2}$.
\end{theorem}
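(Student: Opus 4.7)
The approach will be a multi-scale chaining proof in which the classical union bound over covering nets is replaced by a Kraft--McMillan bound across the admissible sequence of codes. The plan is to telescope $X_t - X_{t_0}$ along the chain $(\pi_k(t))_{k \ge 0}$, control each increment by the sub-Gaussian tail \eqref{eq:increment}, and aggregate the failure probabilities using Kraft's inequality.

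First, since $T$ is finite and $\rho_k \searrow 0$, the chain terminates at $t$, and with $\pi_0(t) \equiv t_0$ one obtains the telescoping identity
\begin{equation*}
X_t - X_{t_0} \;=\; \sum_{k \ge 1} \bigl(X_{\pi_k(t)} - X_{\pi_{k-1}(t)}\bigr).
\end{equation*}
By admissibility condition (i), $\pi_{k-1}(t) = \pi_{k-1}(\pi_k(t))$, so both $\pi_k(t)$ and $\pi_{k-1}(t)$ lie in the same atom of $\cA_{k-1}$, which has diameter at most $\rho_{k-1}$; in particular $d(\pi_k(t), \pi_{k-1}(t)) \le \rho_{k-1}$.

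Next, I would calibrate a level-dependent confidence radius to the codeword length. For each $k \ge 1$ and each $s \in \pi_k(T)$, set $v_k(s) \deq u + \sqrt{\ln(2^{\ell(f_k(s))+1}/p_k)}$, and apply \eqref{eq:increment} to the pair $(s, \pi_{k-1}(s))$. The elementary identity $(u+b)^2 \ge u^2 + b^2$ for $u,b \ge 0$ then gives
\begin{equation*}
\Pr\bigl[|X_s - X_{\pi_{k-1}(s)}| \ge v_k(s)\, d(s, \pi_{k-1}(s))\bigr] \;\le\; e^{-u^2}\, p_k\, 2^{-\ell(f_k(s))}.
\end{equation*}
Summing over $s \in \pi_k(T)$ invokes the Kraft--McMillan inequality \eqref{eq:Kraft}, and summing over $k \ge 1$ uses $\sum_k p_k = 1$; the total failure probability is therefore at most $e^{-u^2}$.

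On the complementary event, the telescoping decomposition and the distance bound yield
\begin{equation*}
|X_t - X_{t_0}| \;\le\; \sum_{k \ge 1} \rho_{k-1}\Bigl(u + \sqrt{\ln \tfrac{2^{L_k(t)+1}}{p_k}}\Bigr)
\end{equation*}
with $L_k(t) \deq \ell(f_k \circ \pi_k(t))$; the second summand is exactly $\bar{\sigma}_{(\cC,\bd{p})}(t)$, and the ``$+1$'' in the exponent $2^{\ell(f_k(s))+1}$ is precisely what makes each term $\sqrt{\ln(2^{L_k(t)+1}/p_k)} \ge 1$, so that $\sum_{k \ge 1} \rho_{k-1} \le \bar{\sigma}_{(\cC,\bd{p})}(t)$ and the claimed bound $|X_t - X_{t_0}| \le (u+1)\,\bar{\sigma}_{(\cC,\bd{p})}(t)$ follows. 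The conceptually important step, and the main technical point, is the replacement of the generic-chaining cardinality bound $|T_k| \le 2^{2^k}$ by Kraft--McMillan: this lets the per-node probability budget depend on the codeword length rather than being uniform across a level, and is precisely the information-theoretic refinement the paper is after; the rest is routine chaining bookkeeping.
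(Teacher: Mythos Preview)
Your argument is correct and follows essentially the same chaining-plus-Kraft--McMillan route as the paper; the only cosmetic difference is that you apply $(u+b)^2 \ge u^2 + b^2$ up front to set the threshold $v_k(s) = u + \sqrt{\ln(2^{\ell+1}/p_k)}$, whereas the paper uses the tighter threshold $\sqrt{u^2 + \ln(2^{\ell+1}/p_k)}$ and splits it via $\sqrt{a+b}\le\sqrt{a}+\sqrt{b}$ at the end. One small clarification: the inequality $\sqrt{\ln(2^{L_k(t)+1}/p_k)} \ge 1$ relies not just on the ``$+1$'' but also on $L_k(t) \ge 1$ (which holds by unique decodability once $|\pi_k(T)| \ge 2$), exactly as the paper notes.
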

\begin{proof}  Since $T$ is finite, for each $t \in T$ there exists some $k = k(t) \in \Naturals$, such that $t = \pi_k(t) = \pi_{k+1}(t) = \dots$. Therefore we can form the chaining decomposition \cite[p.~19]{Talagrand_2014}
	\begin{align*}
		X_t - X_{t_0} = \sum_{k > 0} (X_{\pi_k(t)}-X_{\pi_{k-1}(t)}).
	\end{align*}
For each $k$, let $S_k \deq \pi_k(T)$ and define  $\xi_k : S_k \to \Reals$ by
\begin{align*}
	\xi_k(s) \deq d(s,\pi_{k-1}(s)) \sqrt{\ln \frac{2^{\ell(f_k(s))+1}}{p_k} + u^2}
\end{align*}
and the event $E_k \deq \big\{ \exists s \in S_k : |X_s - X_{\pi_{k-1}(s)}| > \xi_k(s) \big\}$. Using the increment condition \eqref{eq:increment} and the Kraft--McMillan inequality, we have
\begin{align*}
	\Pr[E_k] &\le \sum_{s \in S_k} \Pr[|X_s - X_{\pi_{k-1}(s)}| > \xi_k(s)] \\
	&\le 2 \sum_{s \in S_k} \exp\Bigg(-\frac{\xi^2_k(s)}{d^2(s,\pi_{k-1}(s))}\Bigg) \\
	&= 2 \sum_{s \in S_k} \exp\Bigg(-\ln \frac{2^{\ell(f_k(s))+1}}{p_k} - u^2\Bigg) \\
	&=  p_k e^{-u^2}\sum_{s \in S_k} 2^{-\ell(f_k(s))} \\
	&\le  p_k e^{-u^2}.
\end{align*}
Then
\begin{align*}
	&\Pr \Bigg[ \exists t \in T,\, |X_t - X_{t_0}| > \sum_{k > 0}  \xi_k(\pi_k(t))\Bigg] \nonumber\\
	&\le \Pr \Bigg[\exists t \in T,\, \sum_{k > 0} \Bigg(|X_{\pi_k(t)} - X_{\pi_{k-1}(t)}| - \xi_k(\pi_k(t))\Bigg) > 0 \Bigg] \\
	&\le \Pr\Bigg[\bigcup_k E_k \Bigg] \\
	&\le e^{-u^2},
\end{align*}
where the last step follows  from the assumption on the weights $p_k$. Thus, since $d(\pi_k(t),\pi_{k-1}(t)) \le \rho_{k-1}$, 
\begin{align*}
	|X_t - X_{t_0}| \le \bar{\sigma}_{(\cC,\bd{p})}(t) + u \sum_{k > 0} \rho_{k-1}, \quad \forall t \in T
\end{align*}	
with probability at least $1-e^{-u^2}$. Since $\ell(f_k \circ \pi_k(t)) \ge 1$ for $k \ge 1$ by the unique decodability of $f_k : S_k \to \{0,1\}^*$, it follows that $\frac{2^{\ell(f_k\circ \pi_k(t))+1}}{p_k} \ge 4$, and thus
\begin{align*}
	\sum_{k > 0} \rho_{k-1} \le \bar{\sigma}_{(\cC,\bd{p})}(t)
\end{align*}
for all $t \in T$. This completes the proof.
\end{proof}

\begin{corollary} Under the same assumptions as in Theorem~\ref{thm:basic_estimate},
	\begin{align*}
		\E\Big[\sup_{t \in T}X_t\Big] \le 2\inf_{(\cC,\bd{p})}\sup_{t \in T} \bar{\sigma}_{(\cC,\bd{p})}(t).
	\end{align*}
\end{corollary}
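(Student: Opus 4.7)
The plan is to pass from the high-probability uniform deviation bound in Theorem~\ref{thm:basic_estimate} to an expectation bound by integrating the tail, and then to invoke centeredness of the process together with the reference point $t_0$ to relate $\E[\sup_t X_t]$ to $\E[\sup_t |X_t - X_{t_0}|]$. Once this is done, taking the infimum over admissible pairs $(\cC,\bd{p})$ gives the claim.

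More concretely, fix an admissible $(\cC,\bd{p})$ and write $S \deq \sup_{t \in T} \bar{\sigma}_{(\cC,\bd{p})}(t)$, which we may assume to be finite (else there is nothing to prove). Theorem~\ref{thm:basic_estimate} yields, for every $u > 0$,
\begin{align*}
    \Pr\Big[\sup_{t \in T}|X_t - X_{t_0}| > S(u+1)\Big] \le e^{-u^2}.
\end{align*}
Let $Y \deq \sup_{t \in T}|X_t - X_{t_0}|$ and change variables $v = S(u+1)$ in $\E[Y] = \int_0^\infty \Pr[Y > v]\dif v$. Splitting the integral at $v = S$ and bounding the tail contribution by the Gaussian integral $\int_0^\infty e^{-u^2}\dif u = \tfrac{\sqrt{\pi}}{2}$, I get
\begin{align*}
    \E[Y] \le S + S \int_0^\infty e^{-u^2}\dif u = S\Big(1 + \tfrac{\sqrt{\pi}}{2}\Big) \le 2S.
\end{align*}

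To conclude, I use that $(X_t)$ is centered, so $\E[X_{t_0}] = 0$, hence
\begin{align*}
    \E\Big[\sup_{t \in T} X_t\Big] = \E\Big[\sup_{t \in T}(X_t - X_{t_0})\Big] \le \E[Y] \le 2S,
\end{align*}
and taking the infimum over $(\cC,\bd{p})$ on the right yields the stated bound.

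There is no real obstacle here; the only thing to be mindful of is the numerical constant. One could naively bound $\int_0^\infty e^{-u^2}\dif u$ by $1$, which still gives a constant of $2$, but using $\sqrt{\pi}/2 < 1$ shows the bound is actually a bit better than advertised. The substitution $v = S(u+1)$ is the single point that requires attention, and it relies crucially on the affine form $\bar{\sigma}(t)(u+1)$ of the deviation bound in Theorem~\ref{thm:basic_estimate}, which in turn came from the lower bound $\sum_{k>0}\rho_{k-1} \le \bar{\sigma}_{(\cC,\bd{p})}(t)$ established at the end of that proof.
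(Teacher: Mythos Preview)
Your proof is correct and follows essentially the same route as the paper: set $S=\sup_t \bar{\sigma}_{(\cC,\bd{p})}(t)$, use Theorem~\ref{thm:basic_estimate} to get the tail bound $\Pr[\sup_t|X_t-X_{t_0}|>S(u+1)]\le e^{-u^2}$, split the layer-cake integral at $S$, and bound the tail by $S\int_0^\infty e^{-u^2}\dif u \le S$. The only cosmetic difference is that the paper rewrites the tail as $\Pr[\bar{Z}>\bar{a}+u]\le e^{-u^2/\bar{a}^2}$ before integrating, whereas you do the change of variables $v=S(u+1)$ directly.
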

\begin{proof} Let $\bar{a} \deq \sup_{t \in T} \bar{\sigma}_{(\cC,\bd{p})}(t)$ and $\bar{Z} \deq \sup_{t \in T}|X_t - X_{t_0}|$. By Thm.~\ref{thm:basic_estimate}, $\Pr[\bar{Z} > \bar{a} + u] \le e^{-u^2/\bar{a}^2}$ for all $u > 0$, 	so 
	\begin{align*}
		\E\Big[\sup_{t \in T}X_t \Big] &= \E\Big[\sup_{t \in T}(X_t - X_{t_0})\Big] \\
		&\le \E[\bar{Z}]\\
		&= \int^\infty_0 \Pr[\bar{Z} > r]\dif r \\
		&= \int^{\bar{a}} \Pr[\bar{Z} > r]\dif r + \int^\infty_{\bar{a}} \Pr[\bar{Z} > r] \dif r \\
		&\le \bar{a} + \int^\infty_0 e^{-u^2/\bar{a}^2} \dif u \\
		&\le 2\bar{a}.
	\end{align*}
Now take the infimum over all $(\cC,\bd{p})$.
\end{proof}

\subsection{Some consequences}
\label{ssec:consequences}

A close examination of the proof of Theorem~\ref{thm:basic_estimate} shows that we have not exploited all of the properties of admissible sequences of VLC's. To obtain further results, it will be convenient to decompose the functional $\bar{\sigma}_{(\cC,\bd{p})}$ as follows:
\begin{align*}
	&\bar{\sigma}_{(\cC,\bd{p})}(t) \le \sigma_\cC(t) + \sigma'_{(\cC,\bd{p})}(t) \\
	&\quad\deq \sum_{k > 0} \rho_{k-1} \sqrt{\ell(f_k \circ \pi_k(t))} + \sum_{k > 0} \rho_{k-1}\sqrt{\ln \frac{2}{p_k}}
\end{align*}
We can always choose $\bd{p}$ so that $\sigma'_{(\cC,\bd{p})}(t)$ can be upper-bounded by an absolute constant--e.g., with $p_k = 2^{-k}$ and $\rho_k = r^{-k} \le 2^{-k}$, we will have
\begin{align*}
	\sigma'_{(\cC,\bd{p})}(t) \le \sum_{k > 0}2^{-k+1}\sqrt{(k+1)\ln 2} \le 3.
\end{align*}
Thus, in what follows we can focus on $\sigma_\cC(t)$.

\begin{theorem}\label{thm:refinement} Let $\cC = \{(\pi_k,f_k)\}_{k \ge 1}$ be an admissible sequence of VLC's with $\rho_k = r^{-k}$ for some $r \ge 2$. Then
	\begin{align}\label{eq:sigma_bound}
		\!\!\sigma_\cC(t) \lesssim \sum_{k > 0} r^{-k+1} \sqrt{\ell(f_k \circ \pi_k(t))-\ell(f_{k-1} \circ \pi_{k-1}(t))}.
	\end{align}
\end{theorem}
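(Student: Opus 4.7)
The plan is to reduce the claim to a purely deterministic summation identity for nonnegative, nondecreasing sequences, exploiting admissibility condition (iii) together with subadditivity of the square root. Set $L_k \deq \ell(f_k \circ \pi_k(t))$, and recall $L_0 = \ell(f_0 \circ \pi_0(t)) = 0$ by the convention on $(\pi_0,f_0)$. By admissibility condition (iii), $L_{k-1} \le L_k$ for every $k \ge 1$, so the nonnegative increments $a_k \deq L_k - L_{k-1}$ are well-defined and
\begin{align*}
L_k = \sum_{j=1}^{k} a_j.
\end{align*}

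The heart of the argument is the elementary bound $\sqrt{\sum_{j=1}^k a_j} \le \sum_{j=1}^k \sqrt{a_j}$, which follows by induction from the subadditivity $\sqrt{x+y} \le \sqrt{x} + \sqrt{y}$ for $x,y \ge 0$. Plugging this into the definition of $\sigma_\cC$ and interchanging the order of summation, I would write
\begin{align*}
\sigma_\cC(t) = \sum_{k > 0} r^{-k+1} \sqrt{L_k}
\le \sum_{k > 0} r^{-k+1} \sum_{j=1}^{k} \sqrt{a_j}
= \sum_{j > 0} \sqrt{a_j} \sum_{k \ge j} r^{-k+1}.
\end{align*}

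The inner geometric series evaluates to $\sum_{k \ge j} r^{-k+1} = \frac{r}{r-1} \cdot r^{-j+1}$, and the hypothesis $r \ge 2$ is used precisely here: it guarantees $\frac{r}{r-1} \le 2$, an absolute constant. Substituting $\sqrt{a_j} = \sqrt{L_j - L_{j-1}} = \sqrt{\ell(f_j \circ \pi_j(t)) - \ell(f_{j-1} \circ \pi_{j-1}(t))}$ then yields
\begin{align*}
\sigma_\cC(t) \le \frac{r}{r-1} \sum_{j > 0} r^{-j+1} \sqrt{\ell(f_j \circ \pi_j(t)) - \ell(f_{j-1} \circ \pi_{j-1}(t))},
\end{align*}
which is \eqref{eq:sigma_bound} with an explicit constant of $2$.

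There is no real obstacle here; the only thing to watch is that admissibility condition (iii) is essential for the increments $a_k$ to be nonnegative (so that the square roots make sense and the subadditivity bound applies), and that the restriction $r \ge 2$ is used to absorb $\frac{r}{r-1}$ into an absolute constant hidden by $\lesssim$. The idempotency/nesting from condition (i) and the resolution decay from condition (ii) are not needed for this particular bound, which makes the estimate somewhat opportunistic: it captures only the "telescoping" information hidden in the codelength monotonicity, consistent with the remark preceding the theorem that the proof of Theorem~\ref{thm:basic_estimate} did not exploit every feature of admissibility.
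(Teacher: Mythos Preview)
Your proof is correct and follows essentially the same approach as the paper: both rely on admissibility condition (iii), subadditivity of the square root, and the geometric decay $\rho_k = r^{-k}$ with $r \ge 2$ to produce the constant $(1-r^{-1})^{-1} \le 2$. The only cosmetic difference is that the paper applies subadditivity one step at a time---writing $\sqrt{\ell_k} \le \sqrt{\ell_k - \ell_{k-1}} + \sqrt{\ell_{k-1}}$ and solving the resulting self-referential inequality $\sigma_\cC(t) \le \sum_{k>0} r^{-k+1}\sqrt{\ell_k - \ell_{k-1}} + r^{-1}\sigma_\cC(t)$---whereas you fully unroll the telescoping sum and interchange the order of summation.
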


\begin{proof} Let $\ell_k(t) \deq \ell(f_k \circ \pi_k(t))$, with $\ell_0(t) \equiv 0$. By admissibility of $\cC$, $(\ell_k(t))_{k \ge 1}$ is a nondecreasing sequence of positive integers. Therefore,
	\begin{align*}
		&\sigma_{\cC}(t) = \sum_{k > 0} r^{-k+1} \sqrt{\ell_k(t)} \\
		&\le \sum_{k > 0} r^{-k+1} \sqrt{\ell_k(t)-\ell_{k-1}(t)} + \sum_{k > 0} r^{-k+1} \sqrt{\ell_{k-1}(t)} \\
		&= \sum_{k > 0} r^{-k+1} \sqrt{\ell_k(t)-\ell_{k-1}(t)} + r^{-1}\sigma_\cC(t),
	\end{align*}
	which, together with $(1-r^{-1})^{-1} \le 2$, gives \eqref{eq:sigma_bound}.
\end{proof}
\noindent For each $k > 0$, the quantity under the square root in the upper bound \eqref{eq:sigma_bound} is the extra codelength needed to \textit{refine} the description $\pi_{k-1}(t)$ at resolution $r^{-k+1}$ to a more accurate one $\pi_k(t)$ at resolution $r^{-k}$. Several existing results from the literature can now be obtained as special cases.

First, we can recover the upper bound in terms of \textit{labeled nets} \cite{Guedon_2003,vanHandel_2016}. A labeled net on $(T,d)$ is a pair $(\cA,L)$, where $\cA = (\cA_k)_{k \ge 1}$ is an increasing sequence of partitions of $T$, such that ${\rm diam}(A) \le r^{-k}$ for all $A \in \cA_k$ and all $k \ge 1$, and $L : \cA \to \Naturals$ is a \textit{labeling function} satisfying the condition $\{ L(A) : A \in \cA_{k+1}(B)\} = \{1,\ldots,|\cA_{k+1}(B)|\}$ for all $B \in \cA_k$ and all $k$.

\begin{theorem}\label{thm:labeled_net} For each labeled net $(\cA,L)$, there exists an admissible sequence of VLC's, such that
	\begin{align}\label{eq:labeled_net}
		\sigma_\cC(t) \le c_1\sum_{k > 0} r^{-k+1}\sqrt{\log L(A_{k}(t))} + c_2,
	\end{align}
	where $c_1,c_2 > 0$ are universal constants.
\end{theorem}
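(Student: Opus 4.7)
The plan is to build the admissible sequence $\cC$ directly from $(\cA,L)$ by encoding, for each $t \in T$, the sequence of sibling labels along the chain $T = A_0(t) \supseteq A_1(t) \supseteq A_2(t) \supseteq \ldots$ in the refinement tree. For each $k \ge 1$ and each $A \in \cA_k$ I pick a representative $s_A \in A$ and set $\pi_k(t) := s_{A_k(t)}$. Then $\pi_k$ is idempotent, the partition induced by $\pi_{k+1}$ refines that induced by $\pi_k$ (so condition (i) of admissibility holds with $\varphi_{k|k+1} := \pi_k$), and $\pi_k$ operates at resolution $\rho_k = r^{-k} \to 0$ (condition (ii)).

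Next, I fix any prefix-free binary code $\phi : \Naturals \to \{0,1\}^*$ for the positive integers with codelengths $\ell(\phi(n)) \le 2\log n + 1$ for all $n \ge 1$---for concreteness, Elias's $\gamma$-code---and define
\begin{align*}
f_k(s_{A_k(t)}) := \phi(L(A_1(t)))\,\phi(L(A_2(t)))\, \cdots\, \phi(L(A_k(t))),
\end{align*}
the concatenation of prefix codewords for the labels along the chain. Since $\cA_k$ refines $\cA_{k-1}$, the path $(L(A_1(t)),\ldots,L(A_k(t)))$ is determined by $A_k(t)$ alone, so $f_k$ is well-defined on $\pi_k(T)$. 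Concatenations of prefix-free codewords remain prefix-free (two distinct cells differ at some level $j$, where the $\phi$-codewords are not prefixes of one another), so $f_k$ is uniquely decodable and satisfies Kraft--McMillan, while
\begin{align*}
\ell(f_k \circ \pi_k(t)) = \sum_{j=1}^k \ell(\phi(L(A_j(t))))
\end{align*}
is manifestly nondecreasing in $k$, giving condition (iii).

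With $\cC$ in hand, Theorem~\ref{thm:refinement} yields
\begin{align*}
\sigma_\cC(t) \lesssim \sum_{k > 0} r^{-k+1} \sqrt{\ell(\phi(L(A_k(t))))}.
\end{align*}
Plugging in $\ell(\phi(n)) \le 2\log n + 1$ and using the subadditivity $\sqrt{a+b} \le \sqrt{a} + \sqrt{b}$ bounds the summand by $\sqrt{2}\,\sqrt{\log L(A_k(t))} + 1$. The constant-in-$k$ contribution telescopes to $\sum_{k > 0} r^{-k+1} \le 2$ since $r \ge 2$, and is absorbed into the additive constant $c_2$ in \eqref{eq:labeled_net}. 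The only mildly subtle step is the edge case $L(A_k(t)) = 1$ (a parent with a unique child), where $\log L(A_k(t)) = 0$ but $\phi$ still costs at least one bit; it is precisely this unavoidable per-level overhead that forces the additive $c_2$ and cannot be folded into the leading $c_1$-term.
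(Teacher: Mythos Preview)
Your proof is correct and follows essentially the same route as the paper's: encode the chain of labels $(L(A_1(t)),\ldots,L(A_k(t)))$ with about $2\log L$ bits per level, then invoke Theorem~\ref{thm:refinement}. The only difference is cosmetic---the paper realizes the per-level code via Shannon codes for the subprobability weights $\nu'_{k+1}(A|B)\propto 1/L(A)^2$ (giving increment $\le 2\log L(A_{k+1}(t))+\log\frac{\pi^2}{6}+1$), while you concatenate Elias $\gamma$-codewords directly; both yield the same $2\log L(A_k(t))+O(1)$ codelength increment and hence the same bound.
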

\begin{proof} We will use the construction described in Section~\ref{sec:VLC}. To that end, define a subprobability measure
	\begin{align*}
		\mu'_1(\cdot) = \frac{6}{\pi^2}\sum_{A \in \cA_1} \frac{1}{L^2(A)|A|}\1_A(\cdot)
	\end{align*}
	so that $\mu'_1(A) = \frac{6}{\pi^2 L^2(A)}$ for each $A \in \cA_1$, and, for each $k$ and $B \in \cA_k$,  a  subprobability measure
\begin{align*}
	\nu'_{k+1}(\cdot|B) \deq \frac{6}{\pi^2}\sum_{A \in \cA_{k+1}(B)} \frac{1}{L^2(A)|A|}\1_A(\cdot),
\end{align*}
so that $\nu'_{k+1}(A|B) = \frac{6}{\pi^2L^2(A)}$ for every $A \in \cA_{k+1}(B)$. Consequently, there exist probability measures $\mu_1(\cdot) \ge \mu'_1(\cdot)$ and  $\nu_{k+1}(\cdot|B) \ge \nu'_{k+1}(\cdot|B)$. Then, with $\mu_k$ defined recursively via \eqref{eq:mu_k}, we have
	\begin{align*}
	&	\ell(f_{k+1} \circ \pi_{k+1}(t)) - \ell(f_k \circ \pi_k(t)) \nonumber \\
	&\le \log \frac{1}{\mu_{k+1}(A_{k+1}(t))} - \log \frac{1}{\mu_k(A_k(t))} + 1 \\
	&= \log \frac{1}{\nu_{k+1}(A_{k+1}(t))|A_k(t))} + 1 \\
	&\le \log \frac{1}{\nu'_{k+1}(A_{k+1}(t)|A_k(t))} + 1 \\
	&\le 2 \log L(A_{k+1}(t)) + \log \frac{\pi^2}{6} + 1
	\end{align*}
The inequality \eqref{eq:labeled_net} then follows from Theorem~\ref{thm:refinement}.
\end{proof}

\noindent In a similar way, we can recover a bound of Bednorz \cite{Bednorz2015} that involves an increasing partition just like above:
\begin{theorem}\label{thm:Bednorz_upper} For each sequence of partitions as in Theorem~\ref{thm:labeled_net} and each $\mu \in \cP(T)$, there exists an admissible sequence $\cC$ of VLC's, such that
	\begin{align}\label{eq:Bednorz}
		\sigma_\cC(t) \le \sum_{k > 0} r^{-k+1} \sqrt{\log \frac{\mu(A_{k-1}(t))}{\mu(A_{k}(t))}} + 2
	\end{align}
\end{theorem}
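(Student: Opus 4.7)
The plan is to specialize the construction from Section~\ref{sec:VLC} so that the induced Shannon codelengths directly reflect the $\mu$-masses of the cells of the given partition sequence, and then apply Theorem~\ref{thm:refinement} to collapse the resulting sum into the telescoping form on the right-hand side of \eqref{eq:Bednorz}.

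Concretely, I would take $\mu_1 \deq \mu$ and, for each $k \ge 1$ and each $B \in \cA_k$, set $\nu_{k+1}(\cdot|B)$ to be $\mu$ conditioned on $B$ (which is well-defined whenever $\mu(B) > 0$; cells of zero mass can be merged with a sibling without changing the bound). Then \eqref{eq:mu_k} iterates to $\mu_k(A) = \mu(A)$ for every $A \in \cA_k$ and every $k$, so the Shannon code $f_k$ has codelengths $\ell(f_k(s_A)) = \lceil \log \frac{1}{\mu(A)} \rceil$. Admissibility is routine: $\pi_k$ is idempotent and $\pi_k = \pi_k \circ \pi_{k+1}$ by nesting of the partitions, $\rho_k = r^{-k} \searrow 0$, and $A_{k+1}(t) \subseteq A_k(t)$ gives $\mu(A_{k+1}(t)) \le \mu(A_k(t))$, so $\ell_{k+1}(t) \ge \ell_k(t)$ where $\ell_k(t) \deq \ell(f_k \circ \pi_k(t))$.

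With this choice in hand, Theorem~\ref{thm:refinement} gives $\sigma_\cC(t) \lesssim \sum_{k > 0} r^{-k+1} \sqrt{\ell_k(t) - \ell_{k-1}(t)}$. The rounded Shannon codelengths yield
\begin{align*}
\ell_k(t) - \ell_{k-1}(t) &= \left\lceil \log \tfrac{1}{\mu(A_k(t))}\right\rceil - \left\lceil \log \tfrac{1}{\mu(A_{k-1}(t))}\right\rceil \\
&\le \log \frac{\mu(A_{k-1}(t))}{\mu(A_k(t))} + 1,
\end{align*}
where for $k=1$ we use the convention $\ell_0(t) = 0$ together with $\mu(A_0(t)) = \mu(T) = 1$. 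Finally, the elementary inequalities $\sqrt{a+1} \le \sqrt{a} + 1$ and $\sum_{k > 0} r^{-k+1} \le (1-r^{-1})^{-1} \le 2$ (valid for $r \ge 2$) let me split the square root and absorb the rounding error into the additive constant $2$, yielding \eqref{eq:Bednorz}.

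The only genuine obstacle is bookkeeping of the constants: the factor from Theorem~\ref{thm:refinement} (arising from the geometric series $(1-r^{-1})^{-1}$) and the rounding loss from $\lceil \cdot \rceil$ in the Shannon codelengths both need to be absorbed into the stated additive $2$. The rest of the argument is a direct instantiation of the general machinery already set up in the paper, so I expect no conceptual difficulty beyond this constant-tracking.
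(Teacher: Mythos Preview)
Your approach is essentially identical to the paper's: both take $\mu_1=\mu$ and $\nu_{k+1}(\cdot\mid B)=\mu(\cdot\mid B)$, note that this collapses the recursion to $\mu_k=\mu$, bound the codelength increments by $\log\frac{\mu(A_{k-1}(t))}{\mu(A_k(t))}+1$, and then invoke Theorem~\ref{thm:refinement}. Your explicit handling of zero-mass cells and of admissibility is more careful than the paper's ``without loss of generality'' remark. Your flagged concern about the constants is legitimate and applies equally to the paper's own argument: the factor $(1-r^{-1})^{-1}$ from Theorem~\ref{thm:refinement} is multiplicative on the main sum, not additive, so neither derivation actually lands on the exact constants in \eqref{eq:Bednorz} as written; the inequality should really be read up to universal constants, as in Theorem~\ref{thm:labeled_net}.
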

\begin{proof} Without loss of generality, we may assume that $\mu(A) > 0$ for all $A \in \cA_k$ and all $k$. We follow the same steps as in the proof of Theorem~\ref{thm:refinement}, but take $\mu_1 = \mu$ and $\nu_{k+1}(\cdot|B) = \mu(\cdot|B)$ for all $k$ and all $B \in \cA_k$. Then, for every $B \in \cA_k$ and $A \in \cA_{k+1}(B)$, $\nu_{k+1}(A|B) = \mu(A|B) = \frac{\mu(A)}{\mu(B)}$, and 
	\begin{align*} 
		&\ell(f_{k+1} \circ \pi_{k+1}(t)) - \ell(f_k \circ \pi_k(t)) \nonumber\\
		&\le \log \frac{\mu(A_k(t))}{\mu(A_{k+1}(t))} + 1 \\
		&= \log \frac{\mu(A_k(t))}{\mu(A_{k+1}(t))} + 1.
	\end{align*}
Summing over $k$, we get \eqref{eq:Bednorz}.
\end{proof}

\noindent In fact, there is no loss of generality in working with admissible sequences of VLC's constructed  from a single $\mu \in \cP(T)$ rather than from a recursively defined sequence $(\mu_k)_{k \ge 1}$:

\begin{theorem} For any $\bd{p} = (p_k)_{k \ge 1}$ and any admissible sequence $\cC = \{(\pi_k,f_k)\}_{k \ge 1}$ of VLC's constructed from an increasing sequence of partitions $\cA = (\cA_k)_{k \ge 0}$ and a sequence of probability measures $(\mu_k)_{k \ge 1}$ via \eqref{eq:mu_k}, there exists a probability measure $\mu \in \cP(T)$ and an admissible sequence $\cC' = \{(\pi'_k,f'_k)\}_{k \ge 1}$ of VLC's, such that
	\begin{align}\label{eq:VLC_comparison}
		\bar{\sigma}_{(\cC',\bd{p})}(t) \le \bar{\sigma}_{(\cC,\bd{p})}(t) + \sum_{k > 0} r^{-k+1}\sqrt{\log \frac{2}{p_k}}, \quad t \in T.
	\end{align}
\end{theorem}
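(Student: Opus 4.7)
The plan is to compress all the hierarchical information in $(\mu_k)_{k \ge 1}$ into a single measure by a convex mixture weighted by $\bd{p}$ itself. Specifically, I would set
\[
\mu \deq \sum_{k \ge 1} p_k \mu_k \in \cP(T),
\]
keep the partitions $(\cA_k)_{k \ge 0}$, the maps $\pi_k$, and the representatives $s_A$ of $\cC$ (so that $\pi'_k \deq \pi_k$), and replace the layer-by-layer codes $f_k$ by a single Shannon code with respect to $\mu$ at each resolution, i.e., $\ell(f'_k(s_A)) \deq \lceil \log(1/\mu(A)) \rceil$ for $A \in \cA_k$.

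Next I would verify admissibility of $\cC' = \{(\pi'_k,f'_k)\}_{k \ge 1}$, which is routine: conditions (i) and (ii) transfer from $\cC$ because $\pi_k$ and $\rho_k$ are unchanged; the Kraft inequality holds since $\sum_{s \in \pi_k(T)} 2^{-\ell(f'_k(s))} \le \sum_{A \in \cA_k}\mu(A) = 1$; and condition (iii) follows because $A_{k+1}(t) \subseteq A_k(t)$ forces $\mu(A_{k+1}(t)) \le \mu(A_k(t))$, a monotonicity preserved by the ceiling.

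The heart of the argument is the pointwise lower bound $\mu(A_k(t)) \ge p_k\mu_k(A_k(t))$, immediate from the mixture definition. Combining this with the Shannon upper bound $\ell(f'_k \circ \pi_k(t)) \le \log(1/\mu(A_k(t))) + 1$ and the matching lower bound $\ell(f_k \circ \pi_k(t)) \ge \log(1/\mu_k(A_k(t)))$ coming from the construction in \eqref{eq:mu_k}, I would derive
\[
\frac{2^{\ell(f'_k \circ \pi_k(t))+1}}{p_k} \le \frac{2}{p_k} \cdot \frac{2^{\ell(f_k \circ \pi_k(t))+1}}{p_k}.
\]
Taking natural logarithms, applying the subadditivity $\sqrt{a+b} \le \sqrt{a}+\sqrt{b}$, multiplying through by $\rho_{k-1} = r^{-k+1}$, and summing over $k \ge 1$ should deliver \eqref{eq:VLC_comparison}; the $\ln\!\to\!\log_2$ switch only tightens the bound, since $\ln x \le \log_2 x$ for $x \ge 1$ and in particular for $x = 2/p_k$.

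The main obstacle is more conceptual than technical: one has to recognize that the summable weights $\{p_k\}$ that already appear in \eqref{eq:barsigma} to control the tail deviations can be repurposed as the mixing coefficients for $\mu$, so that the ``penalty'' for collapsing the sequence $(\mu_k)$ to a single measure shows up precisely as the $\sqrt{\log(2/p_k)}$ correction. Once this symmetry is spotted, the rest is bookkeeping around Shannon-code rounding and the elementary inequality $\sqrt{a+b}\le\sqrt{a}+\sqrt{b}$.
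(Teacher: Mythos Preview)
Your proposal is correct and follows essentially the same route as the paper: form the mixture $\mu=\sum_{k\ge 1}p_k\mu_k$, keep $\pi'_k=\pi_k$, take Shannon codelengths $\lceil\log(1/\mu(A_k(t)))\rceil$, and use $\mu(A_k(t))\ge p_k\mu_k(A_k(t))$ to get $\ell(f'_k\circ\pi_k(t))\le \ell(f_k\circ\pi_k(t))+\log(2/p_k)$, after which the subadditivity of $\sqrt{\cdot}$ finishes. Your explicit check of admissibility (especially condition~(iii) via monotonicity of $\mu$ on nested cells) is a point the paper leaves implicit, but otherwise the arguments coincide.
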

\begin{proof} Let $\cC = \{(\pi_k,f_k)\}_{k \ge 1}$ be constructed from $\cA$ and $(\mu_k)_{k \ge 1}$ as indicated. Given $\bd{p}$, form the mixture $\mu = \sum_{k > 0}p_k \mu_k$. Then, keeping the same sequence of partitions $\cA = (\cA_k)_{k \ge 1}$, construct $\cC' = \{(\pi'_k,f'_k)\}_{k \ge 1}$ with $\pi'_k = \pi_k$ and $f'_k$ satisfying $\ell(f'_k \circ \pi'_k (t)) = \lceil \log \frac{1}{\mu(A_k(t))}\rceil$ for every $t \in T$ and every $k \in T$. Then
	\begin{align*}
		\ell(f'_k \circ \pi'_k(t)) &\le \log \frac{1}{\mu(A_k(t))} + 1 \\
		&\le \log \frac{1}{\mu_k(A_k(t))} + \log \frac{2}{p_k} \\
		&\le \ell(f_k \circ \pi_k(t)) + \log \frac{2}{p_k}.
	\end{align*}
	Using this in \eqref{eq:barsigma} with $\rho_{k} = r^{-k}$, we get \eqref{eq:VLC_comparison}.
\end{proof}
\noindent Conversely, any choice of an admissible sequence $\cC = \{(\pi_k,f_k)\}_{k \ge 1}$ of VLC's and $\bd{p} = (p_k)_{k \ge 1}$ gives rise to a probability measure
\begin{align*}
	\mu \propto \sum_{k > 0} p_k \sum_{s \in \pi_k(T)} 2^{-\ell(f_k(s))}\delta_s
\end{align*}
(see, e.g., \cite[Sec.~4]{Maurer_majorizing}), and it is readily verified that, for each $t \in T$ and each $k$,
\begin{align*}
	\log \frac{1}{\mu(A_k(t))} \le \ell(f_k \circ \pi_k(t)) + \log \frac{1}{p_k}.
\end{align*}
Consequently, the machinery developed in this section can be used to give a self-contained proof of Prop.~\ref{prop:it_mm_upper_bound}. In fact, using a construction due to Maurer \cite{Maurer_majorizing}, we can completely eliminate the explicit mminimization over increasing sequences of partitions, as in Prop.~\ref{prop:it_mm_upper_bound}--instead, given the choice of $\mu \in \cP(T)$ (or, equivalently, a uniquely decodable code $f : T \to \{0,1\}^*$), the mappings $\pi_k$ and $f_k$ at resolution level $r^{-k}$ are chosen so that, for each $t \in T$, the binary string $f_k \circ \pi_k(t)$ is obtained by truncating the lossless representation $f(t)$ to the smallest number of bits, such the diameter of the set of all $t' \in T$ that could be `confused' with $t$ on the basis of this truncated representation is at most $r^{-k}$. This set is then taken as $A_k(t)$. We leave the details to the full version of this work.

\section{The lower bound for Gaussian processes}

We now consider the case when the process $(X_t)_{t \in T}$ is Gaussian and $T$ is endowed with the canonical metric \eqref{eq:can_metric}. We will use \(L,L_1,\ldots\) to denote various absolute constants. Note that their
values are not necessarily the same in different appearances.
The main technical tool for the lower bound is the following improvement of Sudakov minoration
\cite[Prop.~2.4.9]{Talagrand_2014}:
\begin{proposition}[]\label{prop:sudakov} Let $t_1,\dots,t_m \in T$ be such that \(d\left(t_{\ell}, t_{\ell^{\prime}}\right) \geq a\) if \(\ell \neq \ell^{\prime}\). Consider \(b > 0\) and, for each $\ell \le m$,  a finite set \(H_{\ell} \subset B\left(t_{\ell}, b\right)\). Then for \(H=\bigcup_{\ell \leq m} H_{\ell}\) we have
\begin{align*}
\E \Big[\sup _{t \in H} X_t\Big] \geq \frac{a}{L_1} \sqrt{\log m}-L_2 b \sqrt{\log m}+\min _{\ell \leq m} \E \Big[\sup _{t \in H_{\ell}} X_t\Big].
\end{align*}
When \(b \leq a /\left(2 L_1 L_2\right)\), we have
\begin{align*}
\E \Big[\sup _{t \in H} X_t\Big] \geq \frac{a}{2 L_1} \sqrt{\log m}+\min _{\ell \leq m} \E \Big[\sup _{t \in H_{\ell}} X_t\Big].
\end{align*}
\end{proposition}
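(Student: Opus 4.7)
The plan is to decompose the supremum over $H$ by first maximizing within each ball $B(t_\ell,b)$ and then across balls, and to combine classical Sudakov minoration with Gaussian concentration of the within-ball suprema.

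First I would write, for each $\ell \le m$ and each $t \in H_\ell$, $X_t = X_{t_\ell} + (X_t - X_{t_\ell})$, so that
\begin{align*}
\sup_{t \in H_\ell} X_t = X_{t_\ell} + W_\ell, \qquad W_\ell \deq \sup_{t \in H_\ell}(X_t - X_{t_\ell}),
\end{align*}
and hence $\sup_{t \in H} X_t = \max_{\ell \le m}(X_{t_\ell} + W_\ell)$. The key deterministic inequality is
\begin{align*}
\max_{\ell \le m}(X_{t_\ell} + W_\ell) \ge \max_{\ell \le m} X_{t_\ell} + \min_{\ell \le m} W_\ell,
\end{align*}
which holds pointwise (evaluate the right-hand side at the maximizer of $X_{t_\ell}$). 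This is the step that lets us avoid any independence assumption between the macroscopic variables $X_{t_\ell}$ and the local fluctuations $W_\ell$.

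Taking expectations, I would lower bound the two terms separately. For the first, since $d(t_\ell, t_{\ell'}) \ge a$ whenever $\ell \neq \ell'$, the classical Sudakov minoration gives $\E[\max_\ell X_{t_\ell}] \ge a\sqrt{\log m}/L_1$. For the second, I use
\begin{align*}
\min_{\ell \le m} W_\ell \ge \min_{\ell \le m} \E[W_\ell] - \max_{\ell \le m} |W_\ell - \E[W_\ell]|,
\end{align*}
so that
\begin{align*}
\E\Big[\min_{\ell \le m} W_\ell\Big] \ge \min_{\ell \le m} \E\Big[\sup_{t \in H_\ell} X_t\Big] - \E\Big[\max_{\ell \le m} |W_\ell - \E[W_\ell]|\Big],
\end{align*}
using that $\E[W_\ell] = \E[\sup_{t \in H_\ell} X_t] - \E[X_{t_\ell}] = \E[\sup_{t \in H_\ell} X_t]$ since the process is centered.

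The main technical step, and the place where the $-L_2 b\sqrt{\log m}$ correction is generated, is controlling $\E[\max_\ell|W_\ell - \E[W_\ell]|]$. Here I would apply Gaussian concentration (the Borell--TIS inequality): each $W_\ell$ is a supremum of a centered Gaussian process whose canonical standard deviation is bounded by $\sup_{t \in H_\ell} d(t,t_\ell) \le b$, so $W_\ell - \E[W_\ell]$ is sub-Gaussian with variance proxy $b^2$. A union bound over $\ell \le m$ followed by integration of the resulting tail estimate yields $\E[\max_\ell|W_\ell - \E[W_\ell]|] \le L_2 b \sqrt{\log m}$. Combining everything gives the first displayed inequality. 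The second inequality is then immediate: the assumption $b \le a/(2L_1 L_2)$ forces $L_2 b \sqrt{\log m} \le \frac{a}{2L_1}\sqrt{\log m}$, absorbing the correction term into half of the Sudakov contribution.

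The main obstacle is the concentration step: verifying that Borell--TIS applies uniformly over the balls with the \emph{same} scale $b$, and that the union bound really produces only a $\sqrt{\log m}$ factor (not $\sqrt{\log(2m)}$ or worse in a way that would break the second conclusion). Everything else is bookkeeping once the pointwise $\max$-plus-$\min$ inequality is in place.
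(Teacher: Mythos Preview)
The paper does not actually prove this proposition; it is quoted verbatim as \cite[Prop.~2.4.9]{Talagrand_2014} and used as a black box in the proof of Theorem~\ref{thm:unif_lower}. Your proposal is correct and is essentially the standard argument given in Talagrand's book: the pointwise inequality $\max_\ell(X_{t_\ell}+W_\ell)\ge \max_\ell X_{t_\ell}+\min_\ell W_\ell$, Sudakov minoration for the $a$-separated centers, and Borell--TIS concentration for the within-ball suprema $W_\ell$ to control $\E[\max_\ell|W_\ell-\E W_\ell|]$ by $L_2 b\sqrt{\log m}$. Your worry about $\sqrt{\log(2m)}$ versus $\sqrt{\log m}$ is harmless, since $\sqrt{\log(2m)}\le \sqrt{\log 2}+\sqrt{\log m}$ and the additive constant is absorbed into $L_2$ (or into the Sudakov term via ${\rm diam}(T)\lesssim G(T)$, as the paper does elsewhere).
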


\noindent Another ingredient is the construction of an increasing sequence of partitions
on \(T\).
We follow Bednorz \cite{Bednorz2015} and Talagrand \cite{Talagrand_1992} and
use a greedy procedure to construct the
partitions $(\cA_k)_{k \ge 0}$ for a fixed $r > 1$.
For any $A \subseteq T$, let $G(A) \deq \E[\sup_{t \in A}X_t]$. Let $\cA_0 =\{T\}$. To construct $\cA_k$ for $k \ge 1$ from $\cA_{k-1}$, we partition each $B \in \cA_{k-1}$ into sets $A_1,\ldots,A_m$ as follows: Let $B_0 \deq B$ and choose $t_1 \in B_0$ so that
\begin{align*}
	G(B(t_1,r^{-k-1}/2)\cap B_0) = \max_{t \in B_0} G(B(t,r^{-k-1}/2)\cap B_0).
\end{align*}
Let $A_1 \deq B(t_1,r^{-k}/2) \cap B_0$ and $B_1 \deq B_0 \setminus A_1$. Now iterate this: If, for $i \ge 1$, $B_{i-1} \neq \varnothing$, choose $t_i \in B_{i-1}$ so that
\begin{align*}
	& G(B(t_i,r^{-k-1}/2)\cap B_{i-1}) \nonumber\\
	&\quad = \max_{t \in B_{i-1}} G(B(t,r^{-k-1}/2)\cap B_{i-1}). %\label{eq:greedy}
\end{align*}
Let $A_i \deq B(t_i,r^{-k}/2) \cap B_{i-1}$ and $B_i \deq B_{i-1} \setminus A_i$ and continue. This procedure is guaranteed to terminate for some $m \le N(T,d,r^{-k}/2)$, where $N(T,d,\cdot)$ are the covering numbers of $(T,d)$.

We are now ready to state the majorizing measure theorem via codes.
\begin{theorem} \label{thm:unif_lower} Let $(X_t)_{t \in T}$ be a centered Gaussian process. For any $r \ge \max\{2,L\}$, there exists an admissible sequence $\cC$ of VLC's with $\rho_k \le r^{-k}$, such that
    \begin{align}
        \E\Big[\sup_{t \in T} X_t\Big] \gtrsim \frac{1}{r} \sigma_{\cC}(t), \qquad \forall \, t \in T.
    \end{align}
\end{theorem}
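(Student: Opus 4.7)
My strategy is to build the admissible VLC $\cC$ from the greedy partition $(\cA_k)_{k \ge 0}$ already constructed in the passage, equipped with its natural greedy labeling, and then to prove the claimed lower bound by iterating Proposition~\ref{prop:sudakov} along greedy chains. Let $\pi_k(t)$ be the greedy center of the unique cell $A_k(t) \in \cA_k$ containing $t$; then $\rho_k \le r^{-k}$ and $\pi_k \circ \pi_{k+1} = \pi_k$ automatically. Label each child of $B \in \cA_{k-1}$ by $L(A_i^B) := i$, its rank in the order in which it was produced by the greedy procedure. Theorem~\ref{thm:labeled_net} then supplies an admissible $\cC$ with
\begin{align*}
\sigma_\cC(t) \;\le\; c_1 \sum_{k > 0} r^{-k+1}\sqrt{\log L(A_k(t))} \;+\; c_2.
\end{align*}
Writing $j_k := L(A_k(t))$, the theorem reduces to establishing
\begin{align*}
\sum_{k > 0} r^{-k}\sqrt{\log j_k} \;\lesssim\; \E\Big[\sup_{s \in T} X_s\Big] \quad \forall\, t \in T,
\end{align*}
the $c_2$ term being absorbed since $G(T) \gtrsim 1$ follows from Sudakov minoration applied to two points attaining $\mathrm{diam}(T) = 1$.

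\textbf{One Sudakov step.} Fix $t$ and $k$. By the greedy construction, the first $j_k$ centers $t_1^{(k)}, \ldots, t_{j_k}^{(k)}$ chosen inside $A_{k-1}(t)$ are pairwise $r^{-k}/2$-separated, and the associated probe balls $H_\ell^{(k)} := B(t_\ell^{(k)}, r^{-k-1}/2) \cap B_{\ell-1}^{(k)}$ have radius $r^{-k-1}/2$ with $G$-values in nonincreasing order, since they were selected greedily to maximize $G$. The threshold $r \ge 2 L_1 L_2$ (absorbed into the $L$ of $r \ge \max\{2,L\}$) ensures that the hypothesis $b \le a/(2 L_1 L_2)$ of Proposition~\ref{prop:sudakov} is met with $a = r^{-k}/2$, $b = r^{-k-1}/2$; the second half of that proposition, applied inside $A_{k-1}(t)$ to these $j_k$ centers and probe balls, then gives
\begin{align*}
G(A_{k-1}(t)) \;\ge\; \frac{r^{-k}}{4 L_1}\sqrt{\log j_k} \;+\; G(H_{j_k}^{(k)}).
\end{align*}

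\textbf{Iteration and the main obstacle.} The principal difficulty is that this inequality produces the probe-ball term $G(H_{j_k}^{(k)})$ rather than the cell term $G(A_k(t))$, and since $H_{j_k}^{(k)}$ is a proper subset of $A_k(t)$, naive telescoping over $k$ compares incompatible quantities. To close the recursion I would strengthen the induction by working with a growth functional indexed by probe balls: prove that for every greedy probe ball $H$ at level $k$ and every $t$ whose greedy chain passes through $H$,
\begin{align*}
G(H) \;\gtrsim\; \sum_{\ell > k} r^{-\ell}\sqrt{\log j_\ell(t)}.
\end{align*}
At each inductive step one re-applies Proposition~\ref{prop:sudakov} inside $H_{j_k}^{(k)}$ to those level-$(k+1)$ greedy centers of $A_k(t)$ that fall in $H_{j_k}^{(k)}$; taking $L$ in $r \ge \max\{2,L\}$ sufficiently large makes $H_{j_k}^{(k)}$ thick enough relative to level-$(k+1)$ balls that a constant fraction of those greedy children lie inside it, preserving the $\sqrt{\log j_{k+1}}$ gain up to constants which are absorbed into the factor $r$ of the conclusion. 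Telescoping from $k = 0$, with $H := T$ as the base case, then yields the desired bound. I expect this probe-ball-to-cell bookkeeping --- essentially a variant of Talagrand's growth-functional machinery --- to be the main technical hurdle; an alternative resolution would be to modify the greedy construction so that cells are replaced by their probe balls from the outset, at the cost of a slight coarsening of the partition absorbed into the constant $L$.
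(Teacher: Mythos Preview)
Your setup is exactly the paper's: the greedy partition, the greedy labeling $j_k = i_k(\pi_k(t))$, and the one-step Sudakov estimate
\[
G(A_{k-1}(t)) \;\ge\; \frac{r^{-k}}{4L_1}\sqrt{\log j_k} \;+\; G(H_{j_k}^{(k)}).
\]
You also correctly locate the only nontrivial point, namely that $H_{j_k}^{(k)} \subsetneq A_k(t)$ prevents a naive telescope.

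The gap is in how you propose to close it. Your fix --- arguing that a constant fraction of the level-$(k+1)$ greedy centers of $A_k(t)$ fall inside $H_{j_k}^{(k)}$ --- is not correct: $H_{j_k}^{(k)}$ has radius $r^{-k-1}/2$, while those centers are $r^{-k-1}/2$-separated, so only boundedly many can lie in $H_{j_k}^{(k)}$, not a constant fraction of $j_{k+1}$. No choice of $L$ repairs this.

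The paper's resolution is a one-line observation you are missing. The greedy rule chose $t_{j_k}$ so that $G\big(B(t_{j_k},r^{-k-1}/2)\cap B_{j_k-1}\big)$ is maximal among all probe balls with center in $B_{j_k-1}$. In particular, the level-$(k+1)$ center $s'$ of $A_{k+1}(t)$ lies in $A_k(t)\subseteq B_{j_k-1}$, and $A_{k+1}(t)\subseteq B(s',r^{-k-1}/2)\cap B_{j_k-1}$; hence
\[
G(H_{j_k}^{(k)}) \;\ge\; G\big(B(s',r^{-k-1}/2)\cap B_{j_k-1}\big) \;\ge\; G(A_{k+1}(t)).
\]
So the Sudakov step actually yields
\[
G(A_{k-1}(t)) \;\ge\; \frac{r^{-k}}{4L_1}\sqrt{\log j_k} \;+\; G(A_{k+1}(t)),
\]
and one simply telescopes separately over even and odd $k$ (losing only a factor $2$), then absorbs the additive constant via $G(T)\gtrsim {\rm diam}(T)=1$. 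No growth-functional machinery is needed; the whole point of choosing the probe radius $r^{-k-1}/2$ (rather than $r^{-k}/2$) in the greedy construction is precisely to make this two-level comparison available.
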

\begin{proof}
Let $L = L_1 L_2$, where $L_1,L_2$ are the contants from Prop.~\ref{prop:sudakov}. Given $r \ge L$, let $(\cA)_{k\ge 0}$ be an increasing sequence of partitions as described above.
  For each $k$ and each $A \in \cA_k$, pick a unique point $s_A \in A$ and then let $\pi_k(t) \deq s_{A_k(t)}$, where $A_k(t)$ is the unique element of $\cA_k$ containing $t$.

We construct the codes $f_k$ inductively. Set $f_0(\pi_0(t))={\mathsf e}$, the empty string in $\{0,1\}^*$. Assuming $f_0, \dots, f_{k-1}$ have been constructed, we assign code lengths $\ell_k(s)$ to the points $s \in \pi_k(T)$ and show that this assignment satisfies the Kraft--McMillan inequality. Thus, there exists a prefix code $f_k : \pi_k(T) \to \{0,1\}^*$ with $\ell(f_k(s)) = \ell_k(s)$ for all $s \in \pi_k(T)$.

For each $s \in \pi_{k}(T)$, let $i_k(s) \in \Naturals$ be the index
of the set $A_k(s) \equiv \pi_{k}^{-1}(s)$ as a partition element in $A_{k-1}(s) \equiv \pi_{k-1}^{-1}(s)$ inherited from the above greedy construction procedure. Then, with
$\ell_k(s) = \ell(f_{k-1}\circ \pi_{k-1}(s)) + \log(i_k(s)+1)^2$, we have
\begin{align*}
&\sum_{s \in \pi_k(T)}2^{-\ell_k(s)} \\
&= \sum_{t \in \pi_{k-1}(T)} \sum_{s:\pi_{k-1}(s)=t}
            2^{-\ell(f_{k-1}(s))-\log(i_k(s)+1))^2}\\
    &\le\sum_{t \in \pi_{k-1}(T)} 2^{-\ell(f_{k-1}(t))}\sum_{i\ge 1} \frac{1}{(i+1)^2} \le 1.
\end{align*}
Hence the existence of a uniquely decodable (prefix) code $f_k$ is guaranteed by
Kraft--McMillan inequality.  Then, by the construction procedure and Prop.~\ref{prop:sudakov},
for each
$t \in T$,
\begin{align*}
    G( A_{k-1}(t)) \ge G(A_{k+1}(t)) + L r^{-k}\sqrt{\log(i_k(\pi_k(t)))}.
\end{align*}
Since $1+\sqrt{\log i} \ge \sqrt{1+\log i }\ge \sqrt{\log(i+1)}$ for $i \ge 1$, 
\begin{align*}
    &G( A_{k-1}(t)) + L r^{-k} \\
    &\ge G(A_{k+1}(t)) + L r^{-k}\sqrt{\log(i_k(\pi_k(t))+1)^2}\\
    &=G(A_{k+1}(t)) + L r^{-k} \sqrt{\ell_k(t)-\ell_{k-1}(t)},
\end{align*}
where $\ell_k(t)=\ell(f_k\circ \pi_k(t))$. Applying induction on $G(A_0(t))$ and $G(A_1(t))$,
we obtain
\begin{align}\label{eq:len_diff}
    G(T) + L \gtrsim \sum_k  r^{-k} \sqrt{\ell_k(t)-\ell_{k-1}(t)}, \qquad \forall \, t\in T
\end{align}
Thm.~\ref{thm:refinement} and the fact that $G(T) \gtrsim {\rm diam}(T)$, so that the universal constant can be absorbed by $G(T)$, give the result.
\end{proof}

\section*{Acknowledgments}

The authors would like to thank Daniel Roy for pointing out some typos and inaccuracies in the proof of Theorem~\ref{thm:basic_estimate}.

%\bibliographystyle{IEEEtran}
%\IEEEtriggeratref{8}
\bibliography{majorizing_codes.bbl}

\end{document}